\documentclass[conference]{IEEEtran}

\usepackage[nospace]{cite}

\usepackage{mathtools, amssymb}
    \allowdisplaybreaks
    \newtheorem{theorem}{Theorem}
    \newtheorem{lemma}[theorem]{Lemma}
    
    \newtheorem{proposition}[theorem]{Proposition}
    \def\ww{{\bar w}}
    \def\dd{\mathrm d}

    \def\le{\leqslant}
    \def\ge{\geqslant}
    \def\hom{\operatorname{hom}}
    \def\VVV#1#2#3{\mvhs{V}\limits^{#1\;#3}_{#2}}
    \def\NNN#1#2#3#4{\mvhs{N}\limits^{#1\;#3}_{#2\;#4}}
    \def\WWW#1#2#3#4#5{\mvhs{W}\limits^{#1\;#3\;#5}_{#2\;#4}}
    \def\mom#1#2#3{\vphantom|_{#1}\mkern-1mu |#2|_{#3}}
    \def\mvhs#1{\mathop{\vcenter{\hbox{\scalebox{-1.3}[1.3]{$\mathsf#1$}}}}}

\usepackage{tikz, pgfplots, pgfplotstable}
    \tikzset{
        every picture/.style={
            line cap=round, line join=round, baseline=-0.5ex,
        }
    }
    \pgfplotsset{
        compat=1.18,
        graph marker/.style={
            scatter, only marks, point meta=explicit symbolic,
            scatter/@pre marker code/.style={/tikz/mark=\pgfplotspointmeta#1},
            scatter/@post marker code/.style={},
        },
    }

\begin{document}

                                   \title
                 {Sidorenko-Inspired Pessimistic Estimation}

                                   \author
          {Yu-Ting Lin, Hsin-Po Wang (National Taiwan University)}

                                 \maketitle

\begin{abstract}\boldmath
    Recently, Abo Khamis et al.\ showed
    how to upper bound the size of a join of multiple tables,
    a problem essential to query optimization in database theory.
    They unified earlier works
    by the following information-theoretical framework.
    1. Let $(X_1, \dotsc, X_n)$ be a row
        selected from the join uniformly at random.
    2. The size of the join is now $\exp(H(X_1, \dotsc, X_n))$.
    3. To upper bound $H(X_1, \dotsc, X_n)$, break it into several
        \emph{local entropies}, such as $H(X_1)$, $H(X_2, X_3)$,
        and $H(X_4|X_5)$, using Shannon-type inequalities.
    4. Upper bound local entropies
        using statistics of the tables being joined.

    The statistics Abo Khamis et al.\ considered are
    the counts of graph homomorphisms from stars to the tables.
    In a follow-up work, we generalized stars to bi-stars.
    In this paper, we generalize bi-stars to caterpillars,
    an even larger class of graphs inspired by Sidorenko's conjecture.
    Simulations show that,
    while Abo Khamis et al.'s star bound overestimates the join size by $m$,
    our bi-star bound overestimates by about $m^{3/4}$,
    and this paper's new caterpillar bound overestimates by about $m^{3/5}$.
    These exponents are obtained by
    log-log regressions with R-square $> 0.98$.
    All homomorphisms are counted in time
    linear in the size of the tables being joined.
\end{abstract}

\section{Introduction}

    Relational databases are the de facto type of databases
    that have stood the test of time.
    Each relational database consists of \emph{relations}, or \emph{tables},
    with columns such as transaction ID, buyer ID, seller ID, and good ID.
    A buyer might use the same shipping address to purchase multiple goods,
    so the shipping address is stored in a different table to save space.

    Separating data into multiple tables means that,
    when those data are needed, we have to perform a \emph{table join}.
    Mathematically speaking, the join of a binary relation $R(A, B)$
    with another binary relation $S(B, C)$
    is the ternary relation
    \begin{equation}
        R(A, B) \land S(B, C) \coloneqq
        \{ (a, b, c) \text{ st } R(a, b) \text{ and } S(b, c) \}.
                                                               \label{Lambda}
    \end{equation}
    Here, $A$, $B$, and $C$ are column types,
    $R(A, B)$ means that the table $R$ contains columns $A$ and $B$,
    and $R(a, b)$ means that $(a, b) \in A \times B$ is a row of $R$.
    A more complicated example of a join query is
    \begin{equation}
        R(A, B) \land S(B, C) \land T(C, A) \coloneqq \Bigl\{ \substack{
            (a, b, c) \text{ st } R(a, b), \\
            S(b, c) \text{, and } T(c, a)
        } \Bigr\}.                                              \label{Delta}
    \end{equation}

    The long-standing challenge in database theory
    is to compute table joins as fast as possible.
    This problem is usually divided into two sub-goals:
    \begin{itemize}
        \item Find a \emph{pessimistic estimate}
            (an upper bound) on join size.
        \item Use the bound to make an informed plan such that
            the time it takes is $\le O(\text{the bound})$.
    \end{itemize}
    In the following, we review earlier works along this line
    using \eqref{Delta} as a running example.
    See also \cite{LGM15,HWW21,SuL20,PKB20,LBP21,KPN22}
    for benchmarks and surveys.

\subsection{Past Works}

    A trivial upper bound on the cardinality of \eqref{Delta} is
    \begin{equation}
        |\eqref{Delta}| \le |A| \cdot |B| \cdot |C|,           \label{triple}
    \end{equation}
    which corresponds to this execution plan.
    \begin{itemize}
        \item[] $>$ Enumerate all $a \in A$:
        \item[] ~~~$>$ Enumerate all $b \in B$:
        \item[] ~~~~~~$>$ Enumerate all $c \in C$:
        \item[] ~~~~~~~~~$>$ Collect $(a, b, c)$ st
                               $R(a, b)$, $S(b, c)$, $T(c, a)$.
    \end{itemize}
    This execution plan needs $O(\eqref{triple})$ time.
    The next bound is
    \begin{equation}
        |\eqref{Delta}| \le
        |R| \cdot |S|, |S| \cdot |T|, |T| \cdot |R|.            \label{2edge}
    \end{equation}
    The first upper bound corresponds to this execution plan.
    \begin{itemize}
        \item[] $>$ Enumerate all $(a, b) \in R$:
        \item[] ~~~$>$ Enumerate all $(b', c) \in S$:
        \item[] ~~~~~~$>$ Collect $(a, b, c)$ st $b = b'$ and $T(c, a)$.
    \end{itemize}
    This plan takes $O(|R| \cdot |S|)$ time to execute,
    again matching the upper bound.

    The next is the AGM (Atserias--Grohe--Marx) bound \cite{AGM13}
    \begin{equation}
        |\eqref{Delta}| \le
        |R(A, B)|^u \cdot |S(B, C)|^v \cdot |T(C, A)|^w,          \label{AGM}
    \end{equation}
    where $u + v, v + w, w + u \ge 1$.
    Note that
    $(u, v, w) = (0, 1, 1)$ gives back the second of \eqref{2edge} and
    $(u, v, w) = (1/2, 1/2, 1/2)$ gives a new, nontrivial inequality
    \begin{equation}
        |\eqref{Delta}| \le |R(A, B)|^{1/2} \cdot
        |S(B, C)|^{1/2} \cdot |T(C, A)|^{1/2}.                   \label{half}
    \end{equation}
    For how to generate execution plans matching the upper bounds,
    see \cite{NRR14,Vel14,NPR18}.

    Abo Khamis et al.\
    \cite{ANS16,ANS17} proposed
    \begin{equation}
        |\eqref{Delta}| \le
        |R(A, B)| \cdot \max_{a\in A} \deg_T(a),                \label{PANDA}
    \end{equation}
    where $\deg_T(a)$ is the number of $c \in C$ that satisfy $T(c, a)$.
    The execution plan matching \eqref{PANDA} can be found
    in those papers as well as in \cite{CHW22,CBS19,HHH21}.

    More recently, Abo Khamis, Nakos, Olteanu, and Suciu \cite{ANO24}
    introduced a major breakthrough unifying everything above
    into a single information-theoretical framework.
    We will detail their framework in Section~\ref{sec:four},
    but the key ingredients are the following statistics
    \[
        |{\deg_R(A)}|_p \coloneq
        \Bigl( \sum_{a\in A} \deg_R(a)^p \Bigr)\strut^{1/p},
    \]
    which they call the \emph{$p$-norm} of the \emph{degree sequence},
    as $\deg_R(A)$ is understood as the sequence
    $\deg_R(a_1)$, $\deg_R(a_2)$, $\dotsc$
    of degrees of elements of $A$.
    The $p$-norm is then used in the following upper bound
    \begin{equation}
        H(X) + p H(Y|X) \le \ln |{\deg_R(A)}|_p^p,                \label{p11}
    \end{equation}
    where $p \ge 0$ and $(X, Y) \in R(A, B)$ is a random row
    of $R$ that can follow any distribution.
    When $p = 0$, $|{\deg_R(A)}|_0^0$ becomes $|A|$,
    which is in \eqref{triple}.
    When $p = 1$, $|{\deg_R(A)}|_1^1 = |R|$,
    which is used in \eqref{AGM} and \eqref{PANDA}.
    When $p \to \infty$, the inequality degenerates into
    $H(Y|X) \le \ln \max_{a\in A} \deg_R(a)$,
    which is used in \eqref{PANDA}.
    See \cite{DSB22,DSB23,DSB25,ZMA25,MZA25} for follow-up works.

    Our previous work \cite{AMBI} extended this approach by introducing
    \emph{bivariate moments}\footnote{
    In \cite{AMBI}, we chose the exponents $p - 1$ and $q - 1$,
    rather than $p$ and $q$,
    so that \eqref{p1q} would not involve $(p + 1) H$ and $(q + 1) H$.
    In retrospect, this shifted indexing does not extend naturally
    to the caterpillar moments to be introduced later,
    and so we fall back to the more combinatorially natural indexing
    wherein $p$ means $p$ leaves attached, not $p \pm 1$.}
    \[
        \mom pRq \coloneqq \sum_{(a,b)\in R}
        \deg_R(a)^{p-1} \deg_R(b)^{q-1}.
    \]
    This adds an extra degree of freedom and can upper bound,
    for $p, q \ge 1$,
    \begin{equation}
        pH(Y|X) + I(X; Y) + q H(X|Y) \le \ln \mom pRq,            \label{p1q}
    \end{equation}
    which degenerates into the old bound \eqref{p11} when $q = 1$.

\subsection{This Work's New Contribution}

    In this paper, we introduce \emph{caterpillar moments}
    \begin{align*}
        \VVV pqr(R) & \coloneqq
        \sum_{\substack{R(a,b)\land R(c,b)}}
        \dd(a)^p \dd(b)^q \dd(c)^r,
        \\ \NNN pqrs(R) & \coloneqq
        \sum_{\substack{R(a,b)\land R(c,b)\land R(c,d)}}
        \dd(a)^p \dd(b)^q \dd(c)^r \dd(d)^s,
        \\ \WWW pqrst(R) & \coloneqq
        \sum_{\substack{R(a,b)\land R(c,b)\land\\R(c,d)\land R(e,d)}}
        \dd(a)^p \dd(b)^q \dd(c)^r \dd(d)^s \dd(e)^t,
    \end{align*}
    where $\dd$ is a shorthand for $\deg_R$.
    Caterpillar moments are generalizations of bivariate moments
    and $p$-norms as
    \begin{align*}
        \mom{p}{R}{1} & = |{\deg_R(A)}|_p^p,
        & \VVV{p}{q}{0}(R) & = \mom{p+1}{R}{q+2},
        \\ \NNN{p}{q}{r}{0}(R) & = \VVV{p}{q}{r+1}(R),
        & \WWW{p}{q}{r}{s}{0}(R) & = \NNN{p}{q}{r}{s+1}(R).
    \end{align*}
    With the new statistics we generalize \eqref{p11} and \eqref{p1q} into
    \small
    \begin{gather*}
        (p + r + 1) H(Y|X) + I + (q + 2) H(X|Y)
        \le \ln \VVV pqr(R),
        \\ (p + r + 2) H(Y|X) + I + (q + s + 2) H(X|Y)
        \le \ln \NNN pqrs(R),
        \\ (p + r + t + 2) H(Y|X) + I + (q + s + 3) H(X|Y)
        \le \ln \WWW pqrst(R),
    \end{gather*}
    \normalsize
    where $I$ stands for $I(X; Y)$
    and $p, q, r, s, t \ge 0$.
    These moments are inspired by Sidorenko's conjecture \cite{Sze15},
    which involves comparing the number of graph homomorphisms.
    We found that graphs admitting fewer homomorphisms
    in the Sidorenko sense yield better upper bounds on join size.

    This paper is organized as follows.
    Section~\ref{sec:four} reviews the information-theoretical framework.
    Section~\ref{sec:star} reviews \eqref{p11}.
    Section~\ref{sec:bistar} reviews \eqref{p1q} from our previous work.
    In Section~\ref{sec:caterpillar}, we explain the main contribution:
    the caterpillar bounds.

\section{The Information-Theoretical Framework}              \label{sec:four}

    The works mentioned above are all based on an information-theoretical
    observation: the size of a set is the exponential
    of the entropy of a uniformly-sampled element of the set.
    So let $(X_1, X_2, \ldots, X_n) $ be a random row selected uniformly
    from a join query, where $X_i$ is the entry in the $i$th column.
    Then
    \begin{equation}
        H(X_1, X_2, \ldots, X_n) = \ln(\text{table size})      \label{lnsize}
    \end{equation}
    Therefore, finding an upper bound on the table size
    is equivalent to bounding the join entropy from above.
    
    We need machinery that generates upper bounds on
    \eqref{lnsize} using simpler local entropies
    of the form $H(X_i)$, $H(X_i, X_j)$, and $H(X_j|X_i)$;
    this is presented in Subsection A.
    Next, we need a way to upper bound these local entropies using numbers
    that can be (easily) computed from the individual tables being joined;
    this is presented in Subsection B.
    Chaining the two subsections will result in something like this:
    \[
        H(X, Y, Z)
        \stackrel{\text{\S A}}\le H(X) + H(Y, Z)
        \stackrel{\text{\S B}}\le \ln|A| + \ln|S|
    \]
    Finally, a linear program finds the best bound
    that can be obtained this way, as detailed in Subsection C.

\subsection{Shannon-Type Inequalities}

    Shannon-type inequalities provide a systematic and powerful way
    to break \eqref{lnsize} into local entropies.
    It is quite common to classify them into four subtypes:
    nonnegativity, monotonicity, subadditivity, and submodularity
    \begin{gather}
        H(X) \ge 0,                                              \label{nonn}
        \\ H(X, Y) \ge H(X),                                     \label{mono}
        \\ H(X) + H(Y) \ge H(X, Y),                              \label{suba}
        \\ H(X, Y) + H(X, Z) \ge H(X, Y, Z) + H(X).              \label{subm}
    \end{gather}
    Proofs of them are omitted for brevity.

    Any substitution and combination of these inequalities
    is also regarded as Shannon-type.
    For instance,
    \[ H(X, Y) \ge H(X) \ge 0 \]
    is just the sum of \eqref{nonn} and \eqref{mono}.
    Similarly,
    \[
        H(X) + H(Y) + H(Z) \ge H(X, Y) + H(Z) \ge H(X, Y, Z)
    \]
    is just the sum of \eqref{suba} and \eqref{suba} with $(X, Y)$
    replaced by $((X, Y), Z)$, and therefore it is of Shannon-type.
    Moreover,
    \[ H(X, Y, Y') + H(X, Z, Z') \ge H(X, Y, Y', Z, Z') + H(X) \]
    which is \eqref{subm} with $(Y, Z)$ replaced by $((Y, Y'), (Z, Z'))$.
    It can also be proved by combining
    \begin{itemize}
        \item \eqref{subm} with $(Y, Z)$ replaced by $(Y, Z)$,
        \item \eqref{subm} with $(X, Y)$ replaced by $((X, Y), Y')$,
        \item \eqref{subm} with $(X, Z)$ replaced by $((X, Z), Z')$, and
        \item \eqref{subm} with $(X, Y, Z)$ replaced by
            $((X, Y, Z), Y', Z')$.
    \end{itemize}
    Since we will be using a linear programming solver at the last step,
    inequalities that are derivable from others
    by linear combinations are considered redundant.
    The following lemma gives a minimal generating set
    of Shannon-type inequalities.

    \begin{proposition} [generators]                          \label{pro:gen}
        All Shannon-type inequalities are
        linear combinations of the following two families:
        (a) monotonicity of adding one variable
        \[ H(X_1, X_2, \dotsc, X_n , Y) \ge H(X_1, X_2, \dotsc, X_n) \]
        and (b) submodularity with single $Y$ and single $Z$
        \begin{align*}
            & H(X_1, \dotsc, X_n , Y) + H(X_1, \dotsc, X_n , Z)
            \\ &\ge H(X_1, \dotsc, X_n, Y, Z) + H(X_1, \dotsc, X_n).
        \end{align*}
    \end{proposition}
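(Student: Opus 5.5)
We need to show that every instance of the four subtypes (nonnegativity, monotonicity, subadditivity, submodularity), with arbitrary sets of variables substituted, is a nonnegative combination of (a) and (b). "Shannon-type" is defined as nonnegative combinations of substitution instances of these four subtypes. So it suffices to derive each substituted instance from (a) and (b). Throughout, write entropies as functions of subsets of the ground variables; a substituted variable such as $(Y,Y')$ is just a union of ground variables.

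\textbf{Step 1: reduce to two general forms.} Nonnegativity $H(S)\ge 0$ is monotonicity $H(S)\ge H(\emptyset)$, since $H(\emptyset)=0$. Subadditivity $H(S)+H(T)\ge H(S\cup T)$ is submodularity $H(S)+H(T)\ge H(S\cup T)+H(S\cap T)$ plus monotonicity $H(S\cap T)\ge H(\emptyset)$. We may assume $S\cap T=\emptyset$ here; otherwise submodularity gives the stronger bound. So everything reduces to two forms with $S\subseteq T$ and arbitrary $S,T$:
\begin{itemize}
\item general monotonicity $H(T)\ge H(S)$;
\item general submodularity $H(S)+H(T)\ge H(S\cup T)+H(S\cap T)$.
\end{itemize}

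\textbf{Step 2: monotonicity by telescoping.} Order $T\setminus S=\{y_1,\dots,y_k\}$. Then
\[
H(T)-H(S)=\sum_{i=1}^{k}\bigl[H(S\cup\{y_1,\dots,y_i\})-H(S\cup\{y_1,\dots,y_{i-1}\})\bigr],
\]
and each bracket is an instance of (a).

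\textbf{Step 3: submodularity by a double telescoping.} This is the main step, and it generalizes the four-term example given before the proposition. Put $X=S\cap T$, $S\setminus T=\{y_1,\dots,y_k\}$ and $T\setminus S=\{z_1,\dots,z_l\}$. Let $Y_i=\{y_1,\dots,y_i\}$ and $Z_j=\{z_1,\dots,z_j\}$, and define
\[
f(i,j)=H(X\cup Y_i\cup Z_j).
\]
The target inequality reads
\[
f(k,0)+f(0,l)\ge f(k,l)+f(0,0).
\]
Its defect is the sum, over $1\le i\le k$ and $1\le j\le l$, of
\[
f(i-1,j)+f(i,j-1)-f(i,j)-f(i-1,j-1).
\]
Each such unit-square term is instance (b) with conditioning set $X\cup Y_{i-1}\cup Z_{j-1}$, $Y=y_i$ and $Z=z_j$. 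To verify the sum identity, note that the sum telescopes in $j$ for each fixed $i$, and then in $i$; the interior terms cancel and only the four corners $f(k,0)$, $f(0,l)$, $f(k,l)$, $f(0,0)$ survive with the right signs. If $k=0$ or $l=0$, the inequality is an equality and needs nothing.

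\textbf{Step 4: conclude.} Every Shannon-type inequality is a nonnegative combination of substitution instances of the four subtypes. By Steps 1–3, each such instance is a nonnegative combination of (a) and (b), and therefore so is the whole inequality.

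The minimality claimed in the surrounding text (that no generator in (a) or (b) is redundant) is not part of this statement, so I would not prove it. The only point needing care is the bookkeeping in the double telescoping of Step 3, which the grid indexing $f(i,j)$ makes transparent.
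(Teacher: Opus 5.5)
The paper does not actually prove this proposition; it says the proof is omitted. The only hint is the four-instance derivation of $H(X,Y,Y')+H(X,Z,Z')\ge H(X,Y,Y',Z,Z')+H(X)$ given just before the statement, and that derivation is exactly your grid argument with $k=l=2$. So your Step~3 is the natural generalization of what the authors sketch. Its bookkeeping is right: each unit square is a genuine instance of (b), because $y_i$ and $z_j$ are distinct and lie outside $X\cup Y_{i-1}\cup Z_{j-1}$.

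Your reduction in Step~1 misses one small case. Take a substituted instance of submodularity $H(X,Y)+H(X,Z)\ge H(X,Y,Z)+H(X)$ in which the tuples replacing $Y$ and $Z$ share a variable outside the tuple replacing $X$. It has the form $H(S)+H(T)\ge H(S\cup T)+H(U)$ with $U\subsetneq S\cap T$, which is not literally your ``general submodularity''. The fix is one line: add the monotonicity $H(S\cap T)\ge H(U)$ from Step~2. Separately, your aside that one may assume $S\cap T=\emptyset$ for subadditivity is unnecessary, since the derivation you give already works for all $S,T$.

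You were right not to attempt minimality, because for $m\ge 2$ family (a) is in fact redundant. With $N$ the set of all variables, $H(X_i\mid X_K)=H(X_i\mid X_{N\setminus\{i\}})+I(X_i;X_{N\setminus(K\cup\{i\})}\mid X_K)$. The chain rule splits the last term into instances of (b). Hence only the $m$ instances of (a) with full conditioning set are needed.
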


    When there are $m$ variables,
    submodularity alone gives $4^m$  inequalities,
    while (a) gives $m 2^{m-1}$ and (b) gives $ \binom m2 2^{m-2}$.
    The proof of Proposition~\ref{pro:gen} is omitted because
    its absence does not affect the validity of our bounds.
    It just means that we use weaker inequalities than there are available.

\subsection{Table Statistics for Local Entropies}

    We now describe how to upper bound local entropy terms
    such as $H(X)$, $H(X, Y)$, and $ H(Y|X)$.
    For any random pair $(X, Y) \in R(A, B)$,
    \begin{align}
        H(X) & \le \ln |A|,                                       \label{p=0}
        \\ H(X, Y) & \le \ln |R(A, B)|,                           \label{p=1}
        \\ H(Y|X) & \le \ln \max_{a\in A} \deg_R(a).             \label{p=oo}
    \end{align}
    All three inequalities are straightforward consequences of
    the fact that entropy is maximized by the uniform distribution.

    Abo Khamis et al.\ \cite{ANO24} unified \eqref{p=0}--\eqref{p=oo}
    as \eqref{p11} with a free parameter $p \ge 0$.
    Now \eqref{p=0}, \eqref{p=1}, and \eqref{p=oo} are special cases
    of \eqref{p11} at $p = 0$, $1$, and $\infty$, respectively.
    Our previous work \cite{AMBI} further generalized \eqref{p11}
    to \eqref{p1q}, making \eqref{p11} a special case at $q = 1$.
    In this work, we provide new upper bounds
    using caterpillar moments.

    A crucial feature of these statistics
    is their computational accessibility.
    All quantities involved—domain sizes, $p$-norms of degree sequences,
    bivariate moments, and caterpillar moments---can
    be computed in time linear in $|R|$.
    This property is essential for practical query optimization.

\subsection{Linear Programming}

    As established earlier, we want to bound
    $H(X_1, \dotsc, X_n) = \ln(\text{table size})$
    from above using Shannon-type inequalities.
    But there are infinitely many bounds,
    so we need to solve an optimization problem:
    \begin{align*}
        \text{minimize}\quad
        & \sum \text{coef} \cdot \text{table-statistics} \\
        \text{subject to}\quad 
        & \text{coef must form valid Shannon-type ineq} \\
        & \text{that upper bounds } H(X_1, \dotsc, X_n).
    \end{align*}
    Any admissible value of this linear program
    is an upper bound on the join cardinality,
    so the minimum thereof should be competitively tight.
    By duality, this is equivalent to
    \begin{align*}
        \text{maximize}\quad & H(X_1, \dotsc, X_n) \\
        \text{subject to}\quad 
        & \text{inequalities from Section 2.A,} \\
        & \text{inequalities from Section 2.B.}
    \end{align*}

    The linear program unifies a wide spectrum of pessimistic cardinality 
    estimation techniques---from AGM to caterpillar.
    Our contribution here is to introduce new table statistics
    to equip the linear program with stronger constraints.

\section{Counting Stars}                                     \label{sec:star}

    In the information-theoretical framework of Section~\ref{sec:four}, the
    entropy of a join is decomposed into several local entropies.
    To upper bound those local entropies, Abo Khamis et al. 
    \cite{ANO24} use $p$-norms of degree sequences.
    We review their approach in this section.

\subsection{The Combinatorial Interpretation}

    As mentioned before,
    Abo Khamis et al. \cite{ANO24} introduced
    \[ |{\deg_R(A)}|_p^p = \sum_{a\in A} \deg_R(a)^p. \]
    When $p$ is an integer (it need not be), $|{\deg_R(A)}|_p^p$
    counts the number of graph homomorphisms from $K_{1,p}$ to $R$.
    Here, $K_{1,p}$ consists of a root and $p$ leaves,
    and it is usually called a \emph{star} in graph theory.
    To determine a homomorphism, first map the root to some $a \in A$,
    and select the image of each leaf, which
    can be any of the $\deg_R(a)$ neighbors of $a$ in $R$.
    Therefore, there are $\deg_R(a)^p$ ways to map the leaves.
    Summing over all choices of $a$ gives $|{\deg_R(A)}|_p^p$ choices.

\subsection{From Stars to Local Entropy}

    The next step is to relate
    the count of homomorphisms to local entropy.
    Let $(X, Y) \in R(A, B)$ be a random row
    that can follow any distribution.
    We let $X_0$ be an iid copy of $X$,
    and let $Y_1, \dotsc, Y_p$ be independently generated
    by the conditional probabilities $P_{Y|X}(b|X_0)$.
    This way, $(X_0, Y_i)$ for each $i$
    follows the same join distribution as $(X, Y)$.
    Hence
    \begin{align*}
        H(X_0, Y_1, \dotsc, Y_p)
        & = H(X) + \sum_{i=1}^p H(Y_i|X) \\
        & = H(X) + p H(Y|X).
    \end{align*}
    Note that $H(X_0, Y_1, \dotsc, Y_p)$ is the entropy of a random star,
    and so it is less than or equal to $\ln(\#\text{stars})$.
    Hence $\ln \mom{p}{R}{1} \ge H(X) + p H(Y|X)$,
    proving \eqref{p11} for the integer case.

    Now an interesting extension is that $p$ need not be an integer.
    
    \begin{lemma}                                             \label{lem:p11}
        (Lemma~4.1 of \cite{ANO24}) Let $p \ge 0$.
        Let $R(A, B)$ be a relation with column types $A$ and $B$.
        Let $(X, Y) \in R(A, B)$ be any random pair.
        Then \eqref{p11} holds.
    \end{lemma}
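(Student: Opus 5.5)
The plan is to prove \eqref{p11} for real $p \ge 0$ by a direct Gibbs-type (log-sum) argument rather than by building a random star, since the star construction only makes sense for integer $p$. Let $P_{XY}$ be the distribution of $(X, Y)$ on $R(A, B)$, with marginal $P_X$ and conditional $P_{Y|X}$. Write $\dd(a) = \deg_R(a)$. Only $a$ with $P_X(a) > 0$ matter, and each such $a$ has $\dd(a) \ge 1$.

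First I rewrite the left-hand side as one expectation. Because $P_{Y|X}(\cdot|a)$ is supported on the $\dd(a)$ neighbors of $a$, the uniform bound gives $H(Y|X=a) \le \ln \dd(a)$ for every $a$. Since $p \ge 0$, multiplying by $p$ keeps the direction, so
\[
    H(X) + p H(Y|X) \le \sum_a P_X(a) \Bigl( \ln \frac{1}{P_X(a)} + p \ln \dd(a) \Bigr)
    = \sum_a P_X(a) \ln \frac{\dd(a)^p}{P_X(a)} .
\]

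Next I apply Jensen's inequality to the concave function $\ln$, with weights $P_X(a)$:
\[
    \sum_a P_X(a) \ln \frac{\dd(a)^p}{P_X(a)} \le \ln \sum_{a:\,P_X(a)>0} \dd(a)^p \le \ln \sum_{a\in A} \dd(a)^p = \ln |{\deg_R(A)}|_p^p .
\]
The last step drops nonnegative terms. Equivalently, the middle quantity is $\ln Z - D(P_X \| Q)$ with $Q(a) = \dd(a)^p / Z$ and $Z = |{\deg_R(A)}|_p^p$, and the bound follows from $D \ge 0$.

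The only real subtlety is bookkeeping at the boundary. Elements $a \in A$ with $\dd(a) = 0$ have $\dd(a)^p = 0$ when $p > 0$ and $0^0 = 1$ when $p = 0$, so they only enlarge the right-hand side. The $p = 0$ case then correctly gives $|A|$. The conditional-entropy step is where $p \ge 0$ is genuinely used: for negative $p$ the inequality on $p H(Y|X)$ would reverse.
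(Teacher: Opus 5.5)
Your proof is correct and complete. The three steps all hold for every real $p\ge0$: the bound $H(Y|X=a)\le\ln\deg_R(a)$ on the support of $X$, multiplication by $p\ge0$, and Jensen's inequality (equivalently $D(P_X\|Q)\ge0$ with $Q\propto\deg_R(a)^p$). Your treatment of $\deg_R(a)=0$ and $0^0=1$ matches the paper's reading of the $p=0$ case as $|A|$.

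The paper omits the proof of this lemma, deferring to \cite{ANO24}. It gives only the random-star argument for integer $p$. There one builds $(X_0,Y_1,\dotsc,Y_p)$ with leaves drawn conditionally i.i.d.\ given $X_0$, so its entropy is exactly $H(X)+pH(Y|X)$. That entropy is then bounded by the logarithm of the number of homomorphisms $K_{1,p}\to R$. This coupling argument gives the statistic its combinatorial meaning, but it cannot reach fractional $p$.

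Your route drops the coupling and covers all real $p\ge0$ at once. It is the same two-step Jensen technique the paper uses in the appendix to prove Theorem~\ref{thm:real} for real caterpillar parameters, specialized to a single root vertex. So your argument is the natural proof within the paper's own framework.
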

    
    The proof of Lemma~\ref{lem:p11} is omitted.
    Stars constitute the first level of the statistics hierarchy:
    they capture one-hop neighborhoods and serve as a conceptual baseline
    for stronger statistics to be introduced later,
    namely bi-stars (two-hop) and caterpillars (multi-hop).

\section{Counting Bi-stars}                                \label{sec:bistar}

    Our contribution in the last paper \cite{AMBI} was to introduce
    \[
        \mom{p}{R(A, B)}{q} \coloneqq
        \sum_{(a,b)\in R} \deg_R(a)^{p-1} \deg_R(b)^{q-1},
    \]
    which specialize to the old $p$-norms when $q = 1$:
    \[
        \mom{p}{R}{1}
        = \sum_{(a,b)\in R}\! \deg_R(a)^{p-1}
        = \sum_{a\in A} \deg_R(a)^p
        = |{\deg_R(A)}|_p^p
    \]
    We then generalize \eqref{p=0}--\eqref{p=oo} and \eqref{p11} into
    \eqref{p1q} for all $p, q \ge 1$,
    making \eqref{p11} a special case of \eqref{p1q} at $q = 1$.

\subsection{The Combinatorial Interpretation}

    When $p$ and $q$ are positive integers, $\mom{p}{R(A, B)}{q}$
    counts the number of graph homomorphisms from a \emph{bi-star} to $R$.
    Here, a $(p, q)$-bi-star, usually denoted by $S_{p,q}$,
    is formed by two stars $K_{1,p}$ and $K_{1,q}$
    with an edge connecting their roots.
    To determine a homomorphism from $S_{p-1,q-1}$ to $R$,
    we map the roots to some edge $(a, b) \in R$;
    then there are $\deg_R(a)^{p-1}$ ways
    to map $p - 1$ leaves to the $B$-side times
    $\deg_R(b)^{q-1}$ ways to map $q - 1$ leaves to the $A$-side.

    \subsection{From Bi-stars to Local Entropy}

    Let $(X, Y) \in R(A,B)$ be a random row that can follow any distribution.
    We let $(X_0, Y_0)$ be an iid copy of $(X, Y)$,
    let $Y_1, \dotsc, Y_{p-1}$ be independently generated
    by the conditional probabilities $P_{Y|X}(b|X_0)$,
    and let $X_1, \dotsc, X_{q-1}$ be independently generated
    by the conditional probabilities $P_{X|Y}(a|Y_0)$.
    This way, each $(X_0, Y_i)$ and each $(X_j, Y_0)$
    follow the same join distribution as $(X, Y)$.
    Consequently,
    \begin{align*}
        \kern2em&\kern-2em
        H(X_0, Y_0, Y_1, \dotsc, Y_{p-1}, X_1, \dotsc, X_{q-1})
        \\ & = H(X_0, Y_0)
            + \sum_{i=1}^{p-1} H(Y_i|X_0)
            + \sum_{j=1}^{q-1} H(X_j|Y_0)
        \\ & = p H(Y|X) + I(X; Y) + q H(X|Y).
    \end{align*}
    Note that the left-hand side is the entropy of a random bi-star,
    which is less than or equal to $\ln(\#\text{bi-stars})$.
    Hence \eqref{p1q} holds for positive integers $p$ and $q$.
    The following lemma extends \eqref{p1q} to all real $p, q \ge 1$.    

    \begin{lemma}                                             \label{lem:p1q}
        (Theorem~3 of \cite{AMBI})
        Let $p, q \ge 1$.  Let $R(A, B)$ be a relation with column types $A$
        and $B$.  Let $(X, Y) \in R(A, B)$ be any random pair.  Then $p
        H(Y|X) + I(X; Y) + q H(X|Y) \le \ln \mom{p}{R(A, B)}{q}$, i.e.,
        \eqref{p1q} holds.
    \end{lemma}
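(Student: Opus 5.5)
We prove Lemma~\ref{lem:p1q} for real $p, q \ge 1$ by Gibbs' inequality; the integer-case counting argument does not extend. Let $P = P_{XY}$ be the distribution of $(X, Y)$, supported on $R$, and write $\dd = \deg_R$. Define a probability distribution on $R$ by
\[
    Q(a, b) \coloneqq \frac{\dd(a)^{p-1} \dd(b)^{q-1}}{\mom pRq}.
\]
Since $\mathrm{KL}(P \,\|\, Q) \ge 0$, we get $H(X, Y) \le -\mathbb E_P \ln Q(X, Y)$, that is,
\[
    H(X, Y) + (p - 1)\,\mathbb E \ln \dd(X) + (q - 1)\,\mathbb E \ln \dd(Y)
    \le \ln \mom pRq.
\]
We will turn this into \eqref{p1q} by lower bounding the two expectations by conditional entropies.

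Because $Y$ given $X = a$ is supported on the $\dd(a)$ neighbors of $a$, we have $H(Y \mid X = a) \le \ln \dd(a)$. Averaging over $a$ gives $\mathbb E \ln \dd(X) \ge H(Y|X)$, and symmetrically $\mathbb E \ln \dd(Y) \ge H(X|Y)$. The coefficients $p - 1$ and $q - 1$ are nonnegative precisely because $p, q \ge 1$, so we may substitute these lower bounds:
\[
    H(X, Y) + (p - 1) H(Y|X) + (q - 1) H(X|Y) \le \ln \mom pRq.
\]
Now use the decomposition $H(X, Y) = H(Y|X) + I(X; Y) + H(X|Y)$. The left-hand side becomes
\[
    p H(Y|X) + I(X; Y) + q H(X|Y),
\]
which is exactly \eqref{p1q}.

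The only delicate point is the sign requirement. Replacing $\mathbb E \ln \dd(X)$ by the smaller quantity $H(Y|X)$ is legitimate only when its coefficient $p - 1$ is nonnegative, and likewise for $q - 1$. This is where the hypothesis $p, q \ge 1$ is used. There is also a degenerate case where $\mom pRq$ might be $0$, but this cannot happen: every $(a, b) \in R$ has $\dd(a), \dd(b) \ge 1$, so each summand is positive whenever $R$ is nonempty. With this construction, the case $q = 1$ recovers the argument behind Lemma~\ref{lem:p11} for $p \ge 1$.
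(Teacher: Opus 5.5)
Your proof is correct and is essentially the paper's own argument. The paper omits the proof of this lemma, but its appendix proof of Theorem~\ref{thm:real} uses the same scheme with a path instead of a single edge $(X,Y)$ as the core: bound each leaf term $H(Y|X)$ by $\mathbb E \ln \dd(X)$ with a nonnegative coefficient, then use Jensen's inequality (your Gibbs/KL step) to pull the logarithm outside the sum. Your KL formulation, with $Q$ normalized over all of $R$, also avoids the support issue that the paper's Jensen step glosses over when it writes the final sum as an equality.
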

    
    The proof of Lemma~\ref{lem:p1q} is omitted here.
    Besides, we also proved that bi-star moments
    are log-convex in $(p, q)$.

    \begin{proposition}                                    \label{pro:holder}
        (Theorem~5 of \cite{AMBI})
        Let $p, q, r, s \ge 1$ and $0 < w < 1$.  Let $R(A, B)$ be a relation.
        Then
        \[
            \mom pRq^w \cdot \mom{r}{R}{s}^{1-w} \ge
            \mom{wp+(1-w)r}{R}{wq+(1-w)s},
        \]
        i.e., $\ln \mom pRq$ and $\ln |{\deg_R(a)}|_p^p$
        are convex in $(p, q)$.
    \end{proposition}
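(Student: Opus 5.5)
The plan is to apply Hölder's inequality termwise to the defining sum. Write
\[
    \mom{p}{R}{q} = \sum_{(a,b)\in R} \deg_R(a)^{p-1}\deg_R(b)^{q-1}.
\]
Set $u = wp+(1-w)r$ and $v = wq+(1-w)s$, so that $u-1 = w(p-1)+(1-w)(r-1)$ and $v-1 = w(q-1)+(1-w)(s-1)$. Then for each edge $(a,b)\in R$ the summand factors exactly as
\[
    \deg_R(a)^{u-1}\deg_R(b)^{v-1}
    = \bigl(\deg_R(a)^{p-1}\deg_R(b)^{q-1}\bigr)^{w}
      \bigl(\deg_R(a)^{r-1}\deg_R(b)^{s-1}\bigr)^{1-w}.
\]
Every $(a,b)\in R$ has $\deg_R(a),\deg_R(b)\ge 1$, so all these quantities are positive reals and real exponents cause no issue.

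Summing over $(a,b)\in R$, apply Hölder's inequality with conjugate exponents $1/w$ and $1/(1-w)$ to the nonnegative sequences $f(a,b)=\deg_R(a)^{p-1}\deg_R(b)^{q-1}$ and $g(a,b)=\deg_R(a)^{r-1}\deg_R(b)^{s-1}$. This gives
\[
    \sum f^w g^{1-w} \le \Bigl(\sum f\Bigr)^w\Bigl(\sum g\Bigr)^{1-w},
\]
which is exactly $\mom{u}{R}{v} \le \mom pRq^w \mom rRs^{1-w}$. Taking logarithms yields midpoint-type convexity with arbitrary weight $w$, i.e.\ convexity of $(p,q)\mapsto\ln\mom pRq$ on $[1,\infty)^2$.

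For the $p$-norm statement, specialize to $q=s=1$. Since $\mom{p}{R}{1}=|{\deg_R(A)}|_p^p$, convexity of $\ln|{\deg_R(A)}|_p^p$ in $p\ge1$ follows at once. If convexity is also wanted for $p\in[0,1)$, I would rerun the same Hölder argument directly on $\sum_{a}\deg_R(a)^p$, summed over $a$ with $\deg_R(a)\ge1$.

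There is no serious obstacle. The only points to check are that the exponents combine affinely, which they do because the exponents $p-1$ and $q-1$ are affine in $(p,q)$, and that the zero-degree case never arises within the sum over edges.
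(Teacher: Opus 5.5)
Your proof is correct and takes essentially the same approach as the paper. The paper defers this statement to \cite{AMBI}, but its appendix proves the caterpillar analogue (Proposition~\ref{pro:convex}) by the same termwise application of H\"older's inequality with exponents $1/w$ and $1/(1-w)$ over the index set of the defining sum, and the $p$-norm claim then follows from $\mom{p}{R}{1} = |{\deg_R(A)}|_p^p$ with $q = s = 1$, exactly as you did.
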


    This convexity has an important conceptual consequence:
    For any upper bound in the form of a linear combination of
    several bi-star bounds, we can greedily tighten it
    by combining two $\mom{}R{}$ into one $\mom{}R{}$.
    This explains why introducing bi-stars brings tighter bounds:
    By treating $|{\deg_R(A)}|_p^p$ and $|{\deg_R(B)}|_q^q$
    as $\mom{p}{R}{1}$ and $\mom{1}{R}{q}$, respectively,
    their convex combination is tighter,
    but this is only available in our previous work \cite{AMBI}. 
    
    Bi-stars form the second level of the statistics hierarchy:
    they capture two-hop neighborhoods.
    Drawing inspiration from Sidorenko's conjecture,
    we now proceed to the third level---the multi-hop statistics.
    
\section{Counting Caterpillars}                       \label{sec:caterpillar}

    In the previous work \cite{AMBI}, we showed that
    bi-star moments provide tighter upper bounds for the linear program
    without changing the information-theoretical framework.
    Naturally one asks if even tighter statistics exist.

    In this section, we explain how Sidorenko inspired us.
    In particular, our new bounds do not depend on
    the correctness of Sidorenko's conjecture.
    Instead, the conjecture guides us to compute
    statistics that \emph{conjecturally} yield tighter bounds.

\subsection{From Sidorenko's Conjecture to Caterpillars}

    Sidorenko's conjecture asserts that for every bipartite graph $H$,
    its homomorphism density into any host graph $G$
    is minimized by the random graph of the same edge density.
    In other words, the conjecture compares
    the number of graph homomorphisms from $H$ to $G$
    and the number of graph homomorphisms from $K_2$ to $G$,
    where $K_2$ is a single edge.

    As of today, the full conjecture remains open,
    but graphs $H$ known to satisfy it are called \emph{Sidorenko graphs},
    and it is an important research direction to identify more such graphs.
    So far, the largest class of Sidorenko graphs
    is provided by Szegedy \cite{Sze15} using entropy methods,
    which involves assigning probability distributions to homomorphisms
    to build larger Sidorenko graphs from smaller ones.

    Informally speaking, our viewpoint of the conjecture is as follows.
    To study homomorphisms from $K_4$ to $G$,
    understanding homomorphisms from $K_3$ is more helpful than from $K_2$.
    In other words, we can study $\hom(H, G)$ by ``Taylor-expanding'' $H$.
    A better approximation of $H$ should yield better control.
    
    Returning to the context of pessimistic estimation, one can see that
    $\VVV pqr(R)$ naturally extends bivariate moments
    from summing over edges to summing over $3$-vertex paths,
    while bivariate moments are already an extension of $p$-norms
    from summing over vertices to summing over edges.
    Likewise, $\NNN pqrs(R)$ extends $\VVV pqr$
    from $3$-vertex paths to $4$-vertex paths,
    and $\WWW pqrst(R)$ extends them to $5$-vertex paths.

    Why longer paths?
    The summand $\deg_R(a)^p \deg_R(b)^q$ is a special case of
    $\deg_R(a)^{p/2} \* \deg_R(b)^q \* \deg_R(c)^{r/2}$ when $c = a$.
    So the former is the ``diagonal term'' of the latter.
    By a Cauchy--Schwarz-type argument,
    uncorrelated products are less than correlated products,
    and hence $\VVV pqr$ should provide tighter bounds.
    Similarly, when summing over longer paths,
    especially $\NNN{p}{0}{0}{s}$ and $\WWW{p}{0}{0}{0}{t}$,
    the endpoints are further apart, and so the correlation is even weaker.
    
\subsection{The Combinatorial Interpretation}

    When $p$, $q$, $r$, $s$, and $t$ are positive integers,
    $\VVV pqr(R)$, $\NNN pqrs(R)$, and
    $\WWW pqrst(R)$ count the number of graph homomorphisms
    from caterpillars to $R$.
    A caterpillar, in graph theory, is a tree
    that reduces to a path when all leaves are removed.
    It generalizes stars (the remaining path is a single vertex)
    and bi-stars (the remaining path is an edge).
    We choose the notation so that
    $\mvhs{V}$, $\mvhs{N}$, and $\mvhs{W}$
    resemble the paths of $3$, $4$, and $5$ vertices, respectively,
    and $p$, $q$, $r$, $s$, and $t$ correspond to
    the number of leaves attached to each vertex.

    In particular, $\VVV pqr$ counts homomorphisms
    from a 3-vertex path with $p$ leaves attached to one endpoint,
    $q$ leaves attached to the middle vertex, and
    $r$ leaves attached to the other endpoint.
    This is because the $p$ leaves each have $\deg_R(a)$ choices,
    the $q$ leaves each have $\deg_R(b)$ choices, and
    the $r$ leaves each have $\deg_R(c)$ choices.
    Similarly, $\NNN pqrs$ counts homomorphisms
    from a 4-vertex path with $p$, $q$, $r$, and $s$ leaves
    attached to the four vertices, respectively.
    Finally, $\WWW pqrst$ counts homomorphisms
    from a path with $p$, $q$, $r$, $s$, and $t$ leaves attached.
    By the following theorem these quantities are easy to compute.

    \begin{theorem}
        For a fixed $(p, q, r, s, t)$, $\VVV pqr(R)$,
        $\NNN pqrs(R)$, and $\WWW pqrst(R)$
        can be computed in time linear in the size of $R$.
    \end{theorem}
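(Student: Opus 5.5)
The approach is dynamic programming along the path, i.e.\ message passing on the caterpillar's spine. Throughout, $\dd = \deg_R$, and $R$ is stored as a list of pairs $(a,b)$; hashing or sorting by key in linear time lets us aggregate over one endpoint. First, compute all degrees in one pass over $R$: for each $(a,b)\in R$, increment $\dd(a)$ and $\dd(b)$. For fixed exponents, each power $\dd(v)^p$ then costs $O(1)$, so all vertex weights take $O(|A|+|B|)\le O(|R|)$ time. Isolated vertices never occur in sums over edges, so we may assume every vertex has degree at least one.

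For $\VVV pqr$, we aggregate at the middle vertex. Define
\[
\alpha(b) \coloneqq \sum_{a:\,R(a,b)} \dd(a)^p,\qquad
\gamma(b) \coloneqq \sum_{c:\,R(c,b)} \dd(c)^r .
\]
Both are computed in one pass over the edges, adding $\dd(a)^p$ into bucket $b$. The summation over $R(a,b)\land R(c,b)$ is over ordered pairs $(a,c)$, including $a=c$, because homomorphisms need not be injective. Hence the sum factorizes:
\[
\VVV pqr(R)=\sum_{b\in B}\alpha(b)\,\dd(b)^q\,\gamma(b),
\]
which is linear time overall.

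For $\NNN pqrs$ we propagate one more step. Set
\[
f(b)=\dd(b)^q\sum_{a:\,R(a,b)}\dd(a)^p,
\]
which is one edge pass. Then set
\[
g(c)=\dd(c)^r\sum_{b:\,R(c,b)} f(b),
\]
which is a second edge pass. Finally,
\[
\NNN pqrs(R)=\sum_{(c,d)\in R} g(c)\,\dd(d)^s .
\]
For $\WWW pqrst$, the symmetric approach is best. Form $g(c)$ from the left as above. From the right, form
\[
h(c)=\sum_{d:\,R(c,d)}\dd(d)^s\sum_{e:\,R(e,d)}\dd(e)^t .
\]
This is again two edge passes. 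Then
\[
\WWW pqrst(R)=\sum_{c} g(c)\,h(c).
\]
In general, each additional spine vertex costs one pass over $R$. So for a fixed shape the cost is $O(k|R|)$ with $k\le 5$, which is linear.

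The only subtle point, and the step to state carefully, is correctness of the factorization: the sum over walks must split into a product of independent partial sums. This holds because the summand is a product of per-vertex weights, and the walk constraint consists only of consecutive edge relations. Distributivity therefore lets us sum out the endpoint variables one at a time, exactly as in a transfer-matrix computation, $\mathbf{1}^\top D_p M D_q M^\top D_r\cdots$ with $M$ the biadjacency matrix. Applying $M$ or $M^\top$ to a vector costs $O(|R|)$, and the diagonal scalings $D$ cost $O(|A|+|B|)$. A naive enumeration of paths could be superlinear, since the number of 3-vertex paths alone is $\sum_b \dd(b)^2$. The proof therefore has to emphasize that we never enumerate paths, only matrix–vector products. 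Non-integer exponents change nothing, because we only evaluate the powers of degrees, and arithmetic is assumed to cost $O(1)$ per operation.
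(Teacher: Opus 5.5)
Your proposal is correct and takes essentially the same approach as the paper: both sum out the path vertices one at a time by distributivity, computing per-vertex partial sums with one pass over $R$ per spine vertex, so paths are never enumerated. The differences are cosmetic. The paper propagates partial sums from one end of the path to the other, while you aggregate toward the middle (via $\alpha(b)\,\dd(b)^q\,\gamma(b)$ and $\sum_c g(c)h(c)$), and you also write out the $\mathsf N$ and $\mathsf W$ cases explicitly, which the paper only calls ``similar.''
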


    \begin{IEEEproof}
        In linear time we can compute $\deg_R(c)^r$ for each $c \in A$.
        In another linear time we can then compute
        $\sum_{c:R(c,b)} \deg_R(b)^q \deg_R(c)^r$ for each $b \in B$.
        Afterward we compute
        $\sum_{b:R(a,b)} \sum_{c:R(c,b)} \deg_R(a)^p \deg_R(b)^q \deg_R(c)^r$
        for each $a \in A$.
        Now $\VVV pqr(R)$ is just the sum over all $a \in A$.
        For the other two caterpillar moments, similar processes apply.
    \end{IEEEproof}

\subsection{From Caterpillars to Local Entropy}

    We now explain why caterpillar moments
    upper bound local entropies.
    The idea is similar to those of stars and bi-stars.
    We let $(X, Y)$ follow any distribution on $R(A, B)$.
    Let $(X_0, Y_0, X'_0)$ be a random path
    such that $(X_0, Y_0)$ and $(X'_0, Y_0)$
    follow the same distribution as $(X, Y)$.
    Then we generate $Y_1, \dotsc, Y_p$ from $X_0$,
    $X_1, \dotsc, X_q$ from $Y_0$, and
    $Y'_1, \dotsc, Y'_r$ from $X'_0$.
    This way, each $H(Y_i | X_0)$ is $H(Y|X)$,
    each $H(X_j | Y_0)$ is $H(X|Y)$, and
    each $H(Y'_k | X'_0)$ is $H(Y|X)$.
    In total, we have entropy terms
    \[ (p + r + 1) H(Y|X) + I(X; Y) + (q + 2) H(X|Y), \]
    which is less than or equal to the logarithm of
    the number of caterpillars, $\VVV pqr(R)$.
    With the same reasoning applied to $\NNN pqrs$ and $\WWW pqrst$,
    we have the following theorem.

    \begin{theorem}                                           \label{thm:int}
        With integral $p, q, r, s, t \ge 0$ and $I = I(X; Y)$,
        {\small
        \begin{gather*}
            (p + r + 1) H(Y|X) + I + (q + 2) H(X|Y)
                \le \ln \VVV pqr(R),
            \\ (p + r + 2) H(Y|X) + I + (q + s + 2) H(X|Y)
                \le \ln \NNN pqrs(R),
            \\ (p + r + t + 2) H(Y|X) + I + (q + s + 3) H(X|Y)
                \le \ln \WWW pqrst(R).
        \end{gather*}}
    \end{theorem}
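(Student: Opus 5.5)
We prove Theorem~\ref{thm:int} by the same construction used for stars and bi-stars. We build a random homomorphic image of the relevant caterpillar whose joint entropy equals the left-hand side. Since a random variable supported on a finite set $S$ has entropy at most $\ln|S|$, and the caterpillar count equals the corresponding moment by the combinatorial interpretation above, the inequality follows. We treat $\VVV pqr$ in detail, then indicate how $\NNN pqrs$ and $\WWW pqrst$ go.

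\textbf{The spine.} Let $(X,Y)$ have an arbitrary law $P_{XY}$ on $R(A,B)$, with conditionals $P_{Y|X}$ and $P_{X|Y}$. Build the spine as a Markov chain. Draw $(X_0,Y_0)\sim P_{XY}$, then draw $X'_0\sim P_{X|Y}(\cdot|Y_0)$ conditionally independently of $X_0$ given $Y_0$. Then both $(X_0,Y_0)$ and $(X'_0,Y_0)$ have law $P_{XY}$, and by the chain rule
\[
H(X_0,Y_0,X'_0)=H(X,Y)+H(X|Y)=H(Y|X)+I(X;Y)+2H(X|Y).
\]

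\textbf{The leaves.} Attach the leaves conditionally independently given their spine vertex, and independently of everything else. We take $Y_1,\dotsc,Y_p\sim P_{Y|X}(\cdot|X_0)$, $X_1,\dotsc,X_q\sim P_{X|Y}(\cdot|Y_0)$, and $Y'_1,\dotsc,Y'_r\sim P_{Y|X}(\cdot|X'_0)$. Each leaf is conditionally independent of all other variables given its parent, so the chain rule adds $pH(Y|X)+qH(X|Y)+rH(Y|X)$. The total is
\[
(p+r+1)H(Y|X)+I+(q+2)H(X|Y).
\]
Every spine pair and every leaf pair has positive probability only on rows of $R$. Hence the whole tuple is supported on homomorphisms of the caterpillar into $R$. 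Their number is $\VVV pqr(R)$, which gives the first inequality.

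\textbf{Longer spines.} For $\NNN pqrs$ the spine is $X_0,Y_0,X'_0,Y'_0$, with $Y'_0\sim P_{Y|X}(\cdot|X'_0)$. For $\WWW pqrst$ we append a further $X''_0\sim P_{X|Y}(\cdot|Y'_0)$. Each new spine step is a Markov step, contributing $H(Y|X)$ or $H(X|Y)$ respectively. The leaves at the new vertices contribute $sH(X|Y)$ at $Y'_0$ and $tH(Y|X)$ at $X''_0$. Counting the terms, the coefficient of $H(Y|X)$ is $p+r+2$ and $p+r+t+2$, and the coefficient of $H(X|Y)$ is $q+s+2$ and $q+s+3$. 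These match the statement.

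\textbf{Main obstacle.} The only delicate point is that every edge of the caterpillar must carry the marginal $P_{XY}$, so that each conditional entropy term equals $H(Y|X)$ or $H(X|Y)$ rather than something else. This requires stationarity along the spine. Alternating $P_{Y|X}$ and $P_{X|Y}$ steps, starting from a pair distributed as $P_{XY}$, keeps every consecutive pair distributed as $P_{XY}$, so stationarity holds.
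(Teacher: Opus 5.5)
Your proposal is correct and follows the paper's argument. You build a random caterpillar whose spine alternates $P_{Y|X}$ and $P_{X|Y}$ steps, so that every edge has law $P_{XY}$. You attach conditionally independent leaves, compute the entropy by the chain rule, and bound it by the logarithm of the number of homomorphisms. You also state explicitly that $X_0$ and $X'_0$ are conditionally independent given $Y_0$; the main text leaves this implicit, but it is exactly the law $Q(a,b,c)=P_Y(b)P_{X|Y}(a|b)P_{X|Y}(c|b)$ used in the appendix.
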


\begin{tikzpicture} [overlay]
    \path foreach \i in {0, ..., 2} {(\i*120 + 90: 2pt) coordinate (c\i)};
    \path foreach \j in {0, ..., 3} {(22.5 + \j*90: 2pt) coordinate (d\j)};
    \path foreach \k in {0, ..., 4} {(\k*72: 2pt) coordinate (\k)};
\end{tikzpicture}
\def\setpic#1#2{\pgfdeclareplotmark{#2}{\pgftext{\copy#1}}}
\def\deficon#1#2{
    \newbox\iconbox\setbox\iconbox=
    \hbox{\tikz \draw [line width=0.2pt, blue] #2;}
    \expandafter\setpic\expandafter{\the\iconbox}{#1b}
    \newbox\iconbox\setbox\iconbox=
    \hbox{\tikz \draw [line width=0.2pt, cyan] #2;}
    \expandafter\setpic\expandafter{\the\iconbox}{#1c}
    \newbox\iconbox\setbox\iconbox=
    \hbox{\tikz \draw [line width=0.2pt, green] #2;}
    \expandafter\setpic\expandafter{\the\iconbox}{#1g}
    \newbox\iconbox\setbox\iconbox=
    \hbox{\tikz \draw [line width=0.2pt, yellow] #2;}
    \expandafter\setpic\expandafter{\the\iconbox}{#1y}
    \newbox\iconbox\setbox\iconbox=
    \hbox{\tikz \draw [line width=0.2pt, red] #2;}
    \expandafter\setpic\expandafter{\the\iconbox}{#1r}
}
\deficon{path3}{(c2)--(c0)--(c1)}
\deficon{K3}{(c2)--(c0)--(c1)--cycle}
\deficon{claw}{(d0)--(d1) (d0)--(d2) (d0)--(d3)}
\deficon{path4}{(d0)--(d1)--(d2)--(d3)}
\deficon{pan3}{(d3)--(d1)--(d2)--(d3)--(d0)}
\deficon{cycle4}{(d0)--(d1)--(d2)--(d3)--(d0)}
\deficon{fan2}{(d0)--(d1)--(d2)--(d3)--(d0) (d0)--(d2)}
\deficon{K4}{(d0)--(d1)--(d2)--(d3)--(d0)--(d2) (d1)--(d3)}
\deficon{K14}{(0)--(1) (0)--(2) (0)--(3) (0)--(4)}
\deficon{chair}{(1)--(2)--(3)--(4) (0)--(2)}
\deficon{path5}{(3)--(4)--(0)--(1)--(2)}
\deficon{cricket}{(2)--(3)--(0)--(2) (1)--(0)--(4)}
\deficon{pan4}{(3)--(4)--(0)--(1)--(2) (1)--(3)}
\deficon{bull}{(3)--(4)--(0)--(1)--(2) (1)--(4)}
\deficon{pan4c}{(3)--(4)--(0)--(1)--(2) (0)--(3)}
\deficon{cycle5}{(0)--(1)--(2)--(3)--(4)--(0)}
\deficon{dart}{(3)--(4)--(0)--(1)--(2) (1)--(4)--(2)}
\deficon{K23}{(3)--(4)--(0)--(1)--(2) (1)--(3) (2)--(4)}
\deficon{butterfly}{(3)--(4)--(0)--(1)--(2) (2)--(0)--(3)}
\deficon{house}{(0)--(1)--(2)--(3)--(4)--(0) (1)--(4)}
\deficon{kite}{(3)--(4)--(0)--(1)--(2) (0)--(2)--(4)}
\deficon{K3u2K1c}{(3)--(4)--(0)--(1)--(2)--(4)--(3)--(1)}
\deficon{fan3}{(0)--(1)--(2)--(3)--(4)--(0) (2)--(0)--(3)}
\deficon{clawuK1c}{(3)--(4)--(0)--(1)--(2) (0)--(3)--(1)--(4)}
\deficon{P2uP3c}{(0)--(1)--(2)--(3)--(4)--(0) (1)--(3) (2)--(4)}
\deficon{P3u2K1c}{(0)--(1)--(2)--(3)--(4)--(0)(2)--(4)--(1)--(3)}
\deficon{wheel4}{(2)--(0)--(1)--(2)--(3)--(4)--(0)--(2)(1)--(4)}
\deficon{K5_e}{(1)--(2)--(3)--(4)--(0)--(1)--(3)--(0)--(2)--(4)}
\deficon{K5}{(1)--(2)--(3)--(4)--(0)--(1)--(3)--(0)--(2)--(4)--(1)}

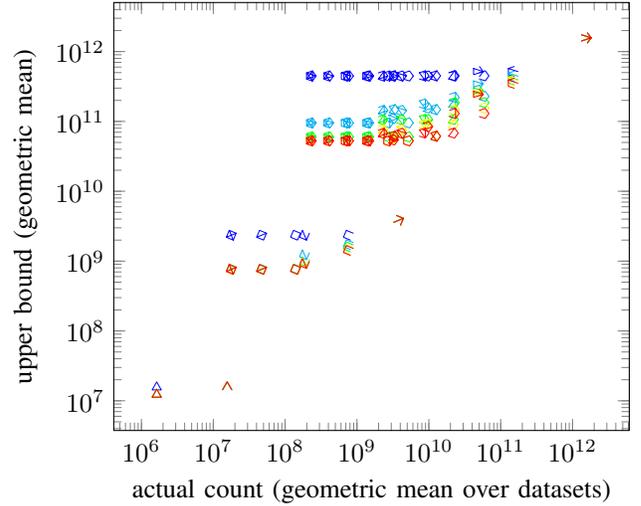
\begin{figure}
    \begin{tikzpicture}
        \begin{axis} [
            xmode=log, xlabel={actual count (geometric mean over datasets)},
            ymode=log, ylabel={upper bound (geometric mean)},
        ]
            \addplot [graph marker=b] table [x=true, y=star, meta=shape]
                {fig54/_average.csv};
            \addplot [graph marker=c] table [x=true, y=bistar, meta=shape]
                {fig54/_average.csv};
            \addplot [graph marker=g] table [x=true, y=vvv, meta=shape]
                {fig54/_average.csv};
            \addplot [graph marker=y] table [x=true, y=nnn, meta=shape]
                {fig54/_average.csv};
            \addplot [graph marker=r] table [x=true, y=www, meta=shape]
                {fig54/_average.csv};
        \end{axis}
    \end{tikzpicture}
    \caption{
        The horizontal axis is the actual number of homomorphisms;
        the vertical axis is the upper bound;
        both axes are geometrically averaged over
        the $39$ datasets we tested.
        The icons represent $H$.
        The colors represent the five methods---longer
        wavelengths mean longer caterpillars.
    }                                                         \label{fig:sim}
\end{figure}

\section{Simulation}

    Note that, when $H$ is a path with $3$ vertices,
    each homomorphism from $H$ to $G$ corresponds to a row of \eqref{Lambda}
    where $R = S = T = E(G)$ and $A = B = C = V(G)$.
    Likewise, when $H$ is $K_3$,
    each homomorphism from $H$ to $G$ corresponds to a row of \eqref{Delta}.
    For bigger $H$, graph homomorphisms correspond to more complicated joins.

    So similar to \cite{AMBI}, we download datasets from
    SNAP\footnote{\texttt{https://snap.stanford.edu/data/}}
    and use them as graph $G$.
    We then count the numbers of homomorphisms from $H$ to $G$
    for all small $H$ with at most $5$ vertices.
    
    We then compare the following five methods.
    \begin{itemize}
        \item Abo Khamis et al.'s star-based bound \cite{ANO24}.
        \item The last item plus $\mom pRq$
            for $p, q \in \{2, 3, 4, 5\}$ \cite{AMBI}.
        \item The last item plus $\VVV p0p$
            for $p \in \{1, 2, 3\}$.
        \item The last item plus $\NNN p00p$
            for $p \in \{1, 2, 3\}$.
        \item The last item plus $\WWW p000p$
            for $p \in \{1, 2, 3\}$.
    \end{itemize}
    The results are presented in Figure~\ref{fig:sim}.

\section{Conclusion}

    In this paper, we progressively generalize star
    to bi-stars and then to caterpillars.
    Each step refines the previous bound by what
    Sidorenko's conjecture suggests to be tighter statistics.
    Empirically, when the star bound overestimates the join size by $m$,
    our bi-star bound overestimates by about $m^{3/4}$,
    and this paper's new caterpillar bound overestimates by about $m^{3/5}$.

\bibliographystyle{IEEEtran}
\bibliography{SidorenQL-31}

\appendices

\section{Real Caterpillar Moments and Convexity}

    Theorem~\ref{thm:int} in Section~\ref{sec:caterpillar} is limted to
    integer parameters because caterpillars have whole numbers of leaves.
    Nevertheless, the entropy bounds themselves extend to
    nonnegative real numbers, just as bi-star bounds
    Lemma~\ref{lem:p1q} assume real parameters.
    We record this here for completeness.

    \begin{theorem}                                          \label{thm:real}
        Let $p, q, r, s, t \ge 0$.
        Let $R(A, B)$ be a relation with column types $A$ and $B$.
        Let $(X, Y) \in R(A, B)$ be any random pair.
        Then
        \begin{gather*}
            (p + r + 1) H(Y|X) + I + (q + 2) H(X|Y)
                \le \ln \VVV pqr(R),
            \\ (p + r + 2) H(Y|X) + I + (q + s + 2) H(X|Y)
                \le \ln \NNN pqrs(R),
            \\ (p{+}r{+}t{+}2) H(Y|X) + I + (q{+}s{+}3) H(X|Y)
            \le \ln \WWW pqrst(R),
        \end{gather*}
        where $I$ is the mutual information $I(X; Y)$.
    \end{theorem}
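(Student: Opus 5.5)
The plan is a change-of-measure argument, i.e.\ Gibbs' inequality, rather than counting. Write $P$ for the law of $(X,Y)$ on $R$. For $\VVV pqr$, consider the random path $(X_0,Y_0,X'_0)$ built as in the integral case: draw $Y_0$ from $P_Y$, then $X_0$ and $X'_0$ independently from $P_{X|Y}(\cdot|Y_0)$. Call its joint law $Q$. It is supported on paths $R(a,b)\land R(c,b)$, and its entropy is
\[
H(Y)+2H(X|Y)=H(Y|X)+I+2H(X|Y)-H(X|Y)+H(X|Y),
\]
which simplifies to $H(X,Y)+H(X|Y)$. Equivalently this equals $H(Y|X)+I+2H(X|Y)$ once we use $H(X,Y)=H(Y|X)+I+H(X|Y)$.

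The core inequality is the standard Gibbs/log-sum bound: for any distribution $Q$ on a finite set $S$ and any weight $w\colon S\to(0,\infty)$,
\[
H(Q)+\mathbb E_Q[\ln w]\le \ln\sum_{s\in S}w(s).
\]
Applying it with $w(a,b,c)=\dd(a)^p\dd(b)^q\dd(c)^r$ gives
\[
H(Y|X)+I+2H(X|Y)+p\,\mathbb E\ln\dd(X_0)+q\,\mathbb E\ln\dd(Y_0)+r\,\mathbb E\ln\dd(X'_0)\le\ln\VVV pqr(R).
\]
Each of $X_0$ and $X'_0$ has marginal $P_X$, and $Y_0$ has marginal $P_Y$. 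So it remains to show
\[
\mathbb E\ln\deg_R(X)\ge H(Y|X)\quad\text{and}\quad\mathbb E\ln\deg_R(Y)\ge H(X|Y).
\]
This holds because, conditioned on $X=a$, the variable $Y$ ranges over at most $\deg_R(a)$ values, so $H(Y|X=a)\le\ln\deg_R(a)$. Averaging over $a$ gives the first inequality, and the second follows symmetrically. Since $p,q,r\ge0$, substituting these lower bounds yields exactly $(p+r+1)H(Y|X)+I+(q+2)H(X|Y)\le\ln\VVV pqr(R)$. Nonnegativity of the exponents is precisely what lets us replace $\mathbb E\ln\dd$ by the smaller conditional entropy.

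The same template handles the longer caterpillars. For $\NNN pqrs$, take the Markov path $X_0\to Y_0\to X_1\to Y_1$ obtained by alternately sampling from $P_{X|Y}$ and $P_{Y|X}$, starting from $(X_0,Y_0)\sim P$. Its entropy is
\[
H(X,Y)+H(X|Y)+H(Y|X)=2H(Y|X)+I+2H(X|Y).
\]
Each coordinate has the correct marginal, so adding $pH(Y|X)+qH(X|Y)+rH(Y|X)+sH(X|Y)$ gives the stated coefficients. For $\WWW pqrst$, extend the chain by one more $X$ step. This adds $H(X|Y)$, giving $2H(Y|X)+I+3H(X|Y)$ plus the leaf terms, which matches.

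The main point to check carefully is that the Markov chain has stationary marginals, i.e.\ that each vertex of the path has law $P_X$ or $P_Y$. This holds because each step is a conditional of the same joint law $P$. The chain rule then gives the entropies above exactly, using the Markov property at each step.

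One further detail is that vertices with degree $0$ never appear on a path, so $w>0$ on the support of $Q$, and the Gibbs inequality applies after restricting $S$ to that support; this only decreases the right-hand side, so the bound still holds. This avoids any interpolation or convexity argument in the style of Lemma~\ref{lem:p1q}. The integral case (Theorem~\ref{thm:int}) is then recovered as the special case where $w$ counts leaf extensions.
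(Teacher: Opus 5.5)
Your proposal is correct and follows essentially the same route as the paper. Both use the same random path ($Y_0\sim P_Y$ with conditionally independent $X$-endpoints, extended by alternating $P_{Y|X}$ and $P_{X|Y}$ steps for the $\mathsf N$ and $\mathsf W$ cases), the same per-vertex bound $H(Y|X=a)\le\ln\deg_R(a)$, and the same final Jensen step $\sum Q\ln(w/Q)\le\ln\sum w$, which you phrase as Gibbs' inequality; your handling of the longer caterpillars and the support issue is more explicit than the paper's ``similar arguments apply.''
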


    \begin{IEEEproof}
        Write $\dd$ for $\deg_R$.
        For $\VVV pqr(R)$, sample a random path $(X_0, Y_0, X_1)$ by
        first sampling $Y_0 \sim P_Y$ and then sampling
        $X_0$ and $X_1$ independently from $P_{X|Y}(a|Y_0)$.
        Denote its pmf by
        \[ Q(a, b, c) \coloneqq P_Y(b) P_{X|Y}(a|b) P_{X|Y}(c|b). \]
        Then
        \begin{multline*}
            H(Y|X) + I(X; Y) + 2 H(X|Y)
            \\ = H(X_0, Y_0, X_1)
                = \sum_{a,b,c} Q(a, b, c) \ln \frac{1}{Q(a, b, c)}.
        \end{multline*}
        Since $Y$ is supported on at most $\dd(a)$ points given $X = a$,
        Jensen's inequality gives
        \[
            p H(Y|X)
            \le p \sum_{a\in A} P_X(a) \ln \dd(a)
            = \sum_{a,b,c} Q(a, b, c) \ln \dd(a)^p.
        \]
        By symmetry,
        \begin{align*}
            q H(X|Y)
            & \le \sum_{a,b,c} Q(a, b, c) \ln \dd(b)^q,
            \\ r H(Y|X)
            & \le \sum_{a,b,c} Q(a, b, c) \ln \dd(c)^r.
        \end{align*}
        The sum of the left-hand sides of the last four inequalities is
        \[ (p + r + 1) H(Y|X) + I(X; Y) + (q + 2) H(X|Y). \]
        Applying Jensen's inequality to the sum of the right-hand sides
        to move the logarithm outside
        \begin{align*}
            \kern2em&\kern-2em
            \sum_{a,b,c} Q(a, b, c)
                \ln \frac{\dd(a)^p \dd(b)^q \dd(c)^r}{Q(a, b, c)}
            \\ & \le \ln \sum_{a,b,c} Q(a, b, c)
                \frac{\dd(a)^p \dd(b)^q \dd(c)^r}{Q(a, b, c)}
            \\ & = \ln \sum_{a,b,c} \dd(a)^p \dd(b)^q \dd(c)^r
                = \ln \VVV pqr(R).
        \end{align*}
        This proves the first inequality the theorem claims.

        For $\NNN pqrs(R)$, define a random path
        $(X_0, Y_0, X_1, Y_1)$ by
        \[ Q_\mathsf N (a, b, c, d) \coloneqq Q(a, b, c) P_{Y|X}(d|c) \]
        and similar arguments apply.
        For $\WWW pqrst(R)$, define a random path
        $(X_0, Y_0, X_1, Y_1, X_2)$ by
        \[
            Q_\mathsf W (a, b, c, d, e) \coloneqq
            Q_\mathsf N (a, b, c, d) P_{X|Y}(e|d)
        \]
        and similar arguments apply.
    \end{IEEEproof}

    We next generalize Proposition \ref{pro:holder}.

    \begin{proposition}                                    \label{pro:convex}
        Let $0 < w < 1$ and $\bar w \coloneqq 1 - w$.
        Let $p, q, r, s, t, p'', q'', r'', s'', t'' \ge 0$ be real numbers.
        Let $p'$ be the weighted average $wp + \ww p''$,
        and define $q'$, $r'$, $s'$, and $t'$ similarly.
        Then the following hold
        \begin{align*}
            \VVV pqr(R)^w \cdot \VVV{p''}{q''}{r''}(R)^\ww
            &\ge \VVV{p'}{q'}{r'}(R),
            \\
            \NNN pqrs(R)^w \cdot \NNN{p''}{q''}{r''}{s''}(R)^\ww
            &\ge \NNN{p'}{q'}{r'}{s'}(R),
            \\
            \WWW pqrst(R)^w \cdot \WWW{p''}{q''}{r''}{s''}{t''}(R)^\ww
            &\ge \WWW{p'}{q'}{r'}{s'}{t'}(R).
        \end{align*}
        Consequently,
        $\ln \VVV pqr(R)$, $\ln \NNN pqrs(R)$, and $\ln \WWW pqrst(R)$
        are convex in their parameters.
    \end{proposition}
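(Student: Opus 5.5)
The plan is to apply H\"older's inequality termwise, in exactly the way Proposition~\ref{pro:holder} is proved for bi-star moments. Each caterpillar moment is a sum, over a fixed finite index set (3-, 4-, or 5-vertex walks in $R$), of a product of powers of degrees. The index set does not depend on the exponents. So, writing $\dd$ for $\deg_R$, each summand at the averaged parameters factors as a product of a $w$-th power and a $\ww$-th power. For $\VVV{}{}{}$ this reads
\[
    \dd(a)^{p'} \dd(b)^{q'} \dd(c)^{r'}
    = \bigl(\dd(a)^{p} \dd(b)^{q} \dd(c)^{r}\bigr)^w
      \bigl(\dd(a)^{p''} \dd(b)^{q''} \dd(c)^{r''}\bigr)^\ww .
\]
This uses $p' = wp + \ww p''$ and so on, together with $\dd(\cdot) \ge 1$ on every vertex appearing in a walk, so that all powers are well defined and positive.

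Next, I sum over all walks $R(a,b)\land R(c,b)$ and apply H\"older's inequality with conjugate exponents $1/w$ and $1/\ww$, i.e.\ $\sum_i f_i^w g_i^\ww \le (\sum_i f_i)^w (\sum_i g_i)^\ww$ for nonnegative $f_i, g_i$. Here $f$ is the summand with parameters $(p,q,r)$ and $g$ the one with $(p'',q'',r'')$. This gives $\VVV{p'}{q'}{r'}(R) \le \VVV pqr(R)^w \VVV{p''}{q''}{r''}(R)^\ww$ directly. The $\mvhs N$ and $\mvhs W$ cases are identical: the sum runs over 4- and 5-vertex walks, and the summand is a product of four or five degree powers, each linear in its exponent in the logarithm. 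Nothing structural changes.

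Finally, taking logarithms yields $\ln M(wx + \ww x'') \le w \ln M(x) + \ww \ln M(x'')$ for each moment $M$ and parameter vectors $x, x''$ in the nonnegative orthant. This is the definition of convexity on a convex domain, so the ``consequently'' clause follows.

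There is no real obstacle. The only points needing care are that the index set is independent of the parameters, and that degrees are positive so that $0^0$-type issues do not arise; every vertex of a walk has degree at least $1$, so this is fine. As an alternative check, the same result follows from the two-function H\"older inequality applied to the weighted measure on walks.
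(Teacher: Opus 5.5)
Your proposal is correct and matches the paper's proof: both apply H\"older's inequality with conjugate exponents $1/w$ and $1/\bar w$ over the parameter-independent index set $\{(a,b,c): R(a,b)\land R(c,b)\}$ (and its 4- and 5-vertex analogues). The summands at the averaged parameters factor as $u^w v^{\bar w}$, and taking logarithms of the resulting inequality gives convexity.
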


    \begin{IEEEproof}
        We only prove the first inequality; the other two are similar.
        Apply H\"older's inequality in the form
        \[
            \Bigl( \sum_i u_i \Bigr)^w
            \Bigl( \sum_i v_i \Bigr)^\ww
            \ge \sum_i u_i^w v_i^\ww
        \]
        to the index set $\{(a,b,c): R(a,b)\land R(c,b)\}$ with
        \[
            u_{a,b,c} \coloneqq \dd(a)^p \dd(b)^q \dd(c)^r,
            \quad
            v_{a,b,c} \coloneqq \dd(a)^{p''} \dd(b)^{q''} \dd(c)^{r''}.
        \]
        This yields
        \begin{align*}
            \kern2em&\kern-2em
            \VVV pqr(R)^w \cdot \VVV{p''}{q''}{r''}(R)^\ww
            \\ & = \Bigl( \sum \dd(a)^p \dd(b)^q \dd(c)^r \Bigr)^w
                \Bigl( \sum \dd(a)^{p''} \dd(b)^{q''} \dd(c)^{r''} \Bigr)^\ww
            \\ & \ge \sum
                \Bigl( \dd(a)^p \dd(b)^q \dd(c)^r \Bigr)^w
                \Bigl( \dd(a)^{p''} \dd(b)^{q''} \dd(c)^{r''} \Bigr)^\ww
            \\ & = \sum
                \dd(a)^{wp+\ww p''}
                \dd(b)^{wq+\ww q''}
                \dd(c)^{wr+\ww r''},
        \end{align*}
        which is exactly $\VVV{p'}{q'}{r'}(R)$.
        This proves the first inequality the proposition claims.
        For the other two inequalities, similar arguments apply.
    \end{IEEEproof}

    To conclude, caterpillar moments admit the same
    continuous parametrization as stars and bi-stars.
    In practical implementations one may sample a finite grid of
    parameters, which can be seen as a numerical approximation
    to a continuous convex optimization problem.

\section{Supplemental Plots}

    In the following, we plot the upper bounds
    versus the actual number of graph homomorphisms for each dataset.
    That is, Figure~\ref{fig:sim} is the geometric mean of these plots.
    The name of each dataset is in the lower-right corner of the plot.
    From these plots one can see that the advantage of
    our new methods shows up in most datasets.

\smallskip

\def\invokefilename#1{%
    \begin{tikzpicture} [scale=0.5]
        \begin{axis} [
            xmode=log, xlabel={actual count},
            ymode=log, ylabel={upper bound},
        ]
            \draw (rel axis cs: 0.9, 0.1) node [above left] {\texttt{#1}};
            \addplot [graph marker=b] table [x=true, y=star, meta=shape]
                {fig54/#1.csv};
            \addplot [graph marker=c] table [x=true, y=bistar, meta=shape]
                {fig54/#1.csv};
            \addplot [graph marker=g] table [x=true, y=vvv, meta=shape]
                {fig54/#1.csv};
            \addplot [graph marker=y] table [x=true, y=nnn, meta=shape]
                {fig54/#1.csv};
            \addplot [graph marker=r] table [x=true, y=www, meta=shape]
                {fig54/#1.csv};
        \end{axis}
    \end{tikzpicture}%
    \hfill
}
\catcode`\_=11
\noindent
\invokefilename{artist_edges}%
\invokefilename{as20000102}%
\invokefilename{athletes_edges}%
\invokefilename{Brightkite_edges}%
\invokefilename{CA-AstroPh}%
\invokefilename{CA-CondMat}%
\invokefilename{CA-GrQc}%
\invokefilename{CA-HepPh}%
\invokefilename{CA-HepTh}%
\invokefilename{com-amazon}%
\invokefilename{com-dblp}%
\invokefilename{com-youtube}%
\invokefilename{company_edges}%
\invokefilename{deezer_europe_edges}%
\invokefilename{Email-Enron}%
\invokefilename{facebook_combined}%
\invokefilename{government_edges}%
\invokefilename{Gowalla_edges}%
\invokefilename{HR_edges}%
\invokefilename{HU_edges}%
\invokefilename{lastfm_asia_edges}%
\invokefilename{musae_chameleon_edges}%
\invokefilename{musae_crocodile_edges}%
\invokefilename{musae_DE_edges}%
\invokefilename{musae_ENGB_edges}%
\invokefilename{musae_ES_edges}%
\invokefilename{musae_facebook_edges}%
\invokefilename{musae_FR_edges}%
\invokefilename{musae_git_edges}%
\invokefilename{musae_PTBR_edges}%
\invokefilename{musae_RU_edges}%
\invokefilename{musae_squirrel_edges}%
\invokefilename{new_sites_edges}%
\invokefilename{oregon1_010526}%
\invokefilename{oregon2_010526}%
\invokefilename{politician_edges}%
\invokefilename{public_figure_edges}%
\invokefilename{RO_edges}%
\invokefilename{tvshow_edges}%
\unskip

\bigskip

    In the following,
    we plot the relative errors of the star-based bounds
    versus the relative errors of our best bounds for each dataset.
    The name of each dataset is in the lower-right corner of the plot.
    The geometric mean of these plots follows.
    From these plots one can see that the advantage of our new methods
    can be characterized by a power law with exponent about $3/5$.

\smallskip

\def\invokefilename#1{%
    \begin{tikzpicture} [scale=0.5]
        \begin{axis} [
            xmode=log, xlabel={star-based relative error},
            ymode=log, ylabel={caterpillar-based relative error},
        ]
            \draw (rel axis cs: 0.9, 0.1) node [above left] {\texttt{#1}};
            \addplot [graph marker=r] table [x=s/t, y=w/t, meta=shape]
                {fig54/#1.csv};
        \end{axis}
    \end{tikzpicture}%
    \hfill
}
\catcode`\_=11
\noindent
\invokefilename{artist_edges}%
\invokefilename{as20000102}%
\invokefilename{athletes_edges}%
\invokefilename{Brightkite_edges}%
\invokefilename{CA-AstroPh}%
\invokefilename{CA-CondMat}%
\invokefilename{CA-GrQc}%
\invokefilename{CA-HepPh}%
\invokefilename{CA-HepTh}%
\invokefilename{com-amazon}%
\invokefilename{com-dblp}%
\invokefilename{com-youtube}%
\invokefilename{company_edges}%
\invokefilename{deezer_europe_edges}%
\invokefilename{Email-Enron}%
\invokefilename{facebook_combined}%
\invokefilename{government_edges}%
\invokefilename{Gowalla_edges}%
\invokefilename{HR_edges}%
\invokefilename{HU_edges}%
\invokefilename{lastfm_asia_edges}%
\invokefilename{musae_chameleon_edges}%
\invokefilename{musae_crocodile_edges}%
\invokefilename{musae_DE_edges}%
\invokefilename{musae_ENGB_edges}%
\invokefilename{musae_ES_edges}%
\invokefilename{musae_facebook_edges}%
\invokefilename{musae_FR_edges}%
\invokefilename{musae_git_edges}%
\invokefilename{musae_PTBR_edges}%
\invokefilename{musae_RU_edges}%
\invokefilename{musae_squirrel_edges}%
\invokefilename{new_sites_edges}%
\invokefilename{oregon1_010526}%
\invokefilename{oregon2_010526}%
\invokefilename{politician_edges}%
\invokefilename{public_figure_edges}%
\invokefilename{RO_edges}%
\invokefilename{tvshow_edges}%
\unskip

\bigskip

    The following is the geometric mean of the plots above.
    One can run log-log regression on this plot and
    find the exponent of the power law to be about $3/5$.
    (More precisely, the trend line is $\log y = 0.6331 \log x$.)
    The R-square of the regression is about $0.9829$.

\smallskip

\begin{tikzpicture}
    \begin{axis} [
        xmode=log, xlabel={star-based relative error (averaged)},
        ymode=log, ylabel={caterpillar-based relative error (averaged)},
    ]
        \addplot [graph marker=r] table [x=s/t, y=w/t, meta=shape]
            {fig54/_average.csv};
    \end{axis}
\end{tikzpicture}%

\end{document}